\documentclass[submission,copyright,creativecommons]{eptcs}
\providecommand{\event}{NCL'26} 

\usepackage{xifthen}
\usepackage{comment}

\ifpdf
  \usepackage{underscore}         
  \usepackage[T1]{fontenc}        
\else
  \usepackage{breakurl}           
\fi

\title{Cyclic Nonassociative Bilinear Logic With Additive Constants as Extension of Full Nonassociative Lambek Calculus}
\author{Pawe\l{} P\l{}aczek
\institute{WSB Merito University in Poznan, Poland}
\email{pawel.placzek@poznan.merito.pl}
\and
Cheng-Syuan Wan
\institute{Tallinn University of Technology, Estonia}
\email{cswan@cs.ioc.ee}
}
\def\titlerunning{CyNBL With Additive Constants as Extension of FNL}
\def\authorrunning{P. P\l{}aczek and C.-S. Wan}

\usepackage{amsmath,amssymb,amsthm}
\usepackage{hyperref}
\usepackage{cleveref}
\usepackage[mathscr]{euscript}
\usepackage{xspace}
\usepackage{enumitem}
\usepackage{cmll}
\usepackage{bussproofs}
\usepackage{array,multirow}
\usepackage{hhline}
\usepackage{cite}
\usepackage{graphicx}

\newcommand{\FNLt}{\textup{FNL$_\top$}\xspace}

\newcommand{\rrtensor}{\textup{(r-$\otimes$)}\xspace}
\newcommand{\rrone}{\textup{(r-1)}\xspace}
\newcommand{\roplusone}{\textup{(r-$\oplus$1)}\xspace}
\newcommand{\roplustwo}{\textup{(r-$\oplus$2)}\xspace}
\newcommand{\rwedge}{\textup{(r-$\wedge$)}\xspace}
\newcommand{\rvee}{\textup{(r-$\vee$)}\xspace}
\newcommand{\rshift}{\textup{(r-shift)}\xspace}
\newcommand{\rcyc}{\textup{(r-cyc)}\xspace}
\newcommand{\rcut}{\textup{(r-cut)}\xspace}
\newcommand{\roplusthree}{\textup{(r-$\oplus$3)}\xspace}
\newcommand{\roplusfour}{\textup{(r-$\oplus$4)}\xspace}

\newcommand{\rassmone}{\textup{(r-assm1)}\xspace}
\newcommand{\rassmtwo}{\textup{(r-assm2)}\xspace}

\newcommand{\FNLcut}{\textup{(cut)}\xspace}
\newcommand{\FNLRtotop}{\textup{($\Rightarrow\top$)}\xspace}
\newcommand{\FNLotimesR}{\textup{($\otimes\Rightarrow$)}\xspace}
\newcommand{\FNLRtootimes}{\textup{($\Rightarrow\otimes$)}\xspace}
\newcommand{\FNLmultimapR}{\textup{($\multimap\Rightarrow$)}\xspace}
\newcommand{\FNLRtomultimap}{\textup{($\Rightarrow\multimap$)}\xspace}
\newcommand{\FNLmultimapinvR}{\textup{($\multimapinv\Rightarrow$)}\xspace}
\newcommand{\FNLRtomultimapinv}{\textup{($\Rightarrow\multimapinv$)}\xspace}
\newcommand{\FNLoneR}{\textup{(1$\Rightarrow$)}\xspace}
\newcommand{\FNLRtoone}{\textup{($\Rightarrow$ 1)}\xspace}
\newcommand{\FNLveeR}{\textup{($\vee\Rightarrow$)}\xspace}
\newcommand{\FNLRtovee}{\textup{($\Rightarrow\vee$)}\xspace}
\newcommand{\FNLwedgeR}{\textup{($\wedge\Rightarrow$)}\xspace}
\newcommand{\FNLRtowedge}{\textup{($\Rightarrow\wedge$)}\xspace}

\newcommand{\provS}{\vdash_{\mathbf{S}_\Phi}}
\newcommand{\provFNL}{\vdash_\FNLt}

\newcommand{\agdahen}{%
  \begingroup
  \raisebox{-1mm}
  {\includegraphics[height=1em]{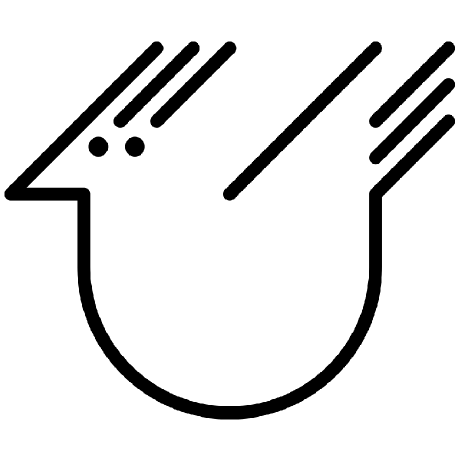}}%
  \endgroup
}
\newcommand{\agdaurl}{https://github.com/cswphilo/CyNBL-conservative-over-FNL-top}
\newcommand{\agdahref}{dbfe8667e652a2461d25700eed688f76d1364730}
\newcommand{\agdaref}[1]{\href{\agdaurl/blob/\agdahref/#1}{\agdahen}}

\theoremstyle{definition}
\newtheorem{definition}{Definition}[section]

\theoremstyle{plain}
\newtheorem{theorem}[definition]{Theorem}
\newtheorem{lemma}[definition]{Lemma}
\newtheorem{corollary}[definition]{Corollary}
\newtheorem{proposition}[definition]{Proposition}

\theoremstyle{remark}
\newtheorem{remark}[definition]{Remark}

\crefname{definition}{Definition}{Definitions}
\crefname{theorem}{Theorem}{Theorems}
\crefname{lemma}{Lemma}{Lemmas}
\crefname{proposition}{Proposition}{Propositions}
\crefname{corollary}{Corollary}{Corollaries}

\newcolumntype{C}{>{\hfil\displaystyle}c<{\hfil}}
\newcolumntype{L}{>{\displaystyle}l<{\hfil}}

\newenvironment{rules}{\[\array{c@{}L@{}c@{}L}\hspace*{0.1\textwidth} & \hspace*{0.4\textwidth} & \hspace*{0.1\textwidth} & \hspace*{0.4\textwidth}\\[-3em]}{\endarray\]\ignorespacesafterend}

\begin{document}
\maketitle

\begin{abstract}
Cyclic Linear Logic (CyMALL, CyBL) is a compromise between Commutative Linear Logic (MALL) and Noncommutative Linear Logic (Bilinear Logic, BL). We do not admit commutativity of the tensor, but we still admit only one De Morgan negation. MALL and its variants are natural extensions of Full Lambek Calculus (FL). Abrusci proved that CyBL is not a conservative extension of FL, if we admit additive constants in these logics. In this paper we show that Cyclic Nonassociative Bilinear Logic (CyNBL) with both additive constants is a strongly conservative extension of Full Nonassociative Lambek Calculus with only one additive constant -- the upper bound -- and that this holds even in the presence of an arbitrary set of nonlogical axioms.
\end{abstract}

\section{Introduction and preliminaries}

In the field of substructural logics two logics have been widely studied: the Lambek Calculus (L) and Linear Logic (LL or PLL). The Lambek Calculus was introduced by Lambek in \cite{lambek-mathematics-1958} as a tool for modeling natural language syntax. Linear Logic was introduced by Girard in \cite{girard-linear-1987} as a refinement of classical logic, which captures the notion of resource sensitivity. Both logics may be considered more like a framework than a single logic -- contrary to classical logic. Usually one considers Multiplicative-Additive Linear Logic (MALL) as the standard version of Linear Logic, which lacks exponentials; in this paper we do so, too.

The two logics at the centre of this paper are Cyclic Nonassociative Bilinear Logic (CyNBL) and Full Nonassociative Lambek Calculus with only the top additive constant (\FNLt), a variant of the Lambek Calculus. Our main result is that CyNBL is a strongly conservative extension of \FNLt. Before stating this precisely, we first explain, in the remainder of this introduction, how CyNBL and \FNLt arise from the broader landscape of substructural logics; readers mainly interested in the result itself may keep this in mind while going through the following historical and terminological remarks.

The connection between Lambek Calulus and Linear Logic is a very important topic in the field of substructural logics. The Lambek Calculus can be seen as a fragment of Linear Logic, and Linear Logic can be seen as an extension of the Lambek Calculus. However, the exact relationship between these two logics is not straightforward, and it has been the subject of much research. First, Lambek Calculus does not admit the exchange rule and hence it is noncommutative, while Linear Logic admits the exchange rule and hence it is commutative. This difference leads to Bilinear Logic (BL, usually called Noncommutative Linear Logic or NMALL) due to Abrusci \cite{abrusci-phase-1991}, which is MALL without the exchange rule. A compromise between MALL and BL is Cyclic Linear Logic (CyBL, also called CyMALL) due to Yetter \cite{yetter-quantales-1990}, which is BL with the cyclic rule, which allows one to cyclically permute the formulas in sequents. The cyclic rule is a weaker version of the exchange rule, so CyBL is noncommutative, but it still admits only one De Morgan negation, so it is not as noncommutative as BL.

On the other hand, the Lambek Calculus can be extended by adding additive connectives and constants, which leads to Full Lambek Calculus (FL). The main difference between FL and BL or CyBL is the presence of negation(s) and a new multiplicative constant. Another interesting variant of L is Nonassociative Lambek Calculus (NL) due to Lambek \cite{lambek-deductive-1968}. The main difference between L and NL is the lack of associativity of the tensor. The extension of NL by additive connectives and constants is Full Nonassociative Lambek Calculus (FNL). Buszkowski \cite{buszkowski-involutive-2017} introduced Involutive Nonassociative Lambek Calculus (InNL)\footnote{In \cite{buszkowski-involutive-2017} the logic does not admit multiplicative constants. But, Buszkowski provided instructions how to add multiplicative constant and additive connectives and constants.}, which is an extension of NL with two involutive negations from BL. InNL is a multiplicative, constant-free part of Nonassociative Bilinear Logic (NBL), which is a nonassociative variant of BL, studied later in Płaczek \cite{placzek-one-sided-2020}. One may easily construct CyNBL -- a cyclic version of NBL, first studied by de Groote and Lamarche in \cite{de-groote-classical-2002}.

To sum up, we have six variants of linear logics due to commutativity and associativity. The same can be applied to Lambek calculi. Noncommutative variants are more useful for modeling natural language syntax (see Abrusci \cite{abrusci-classical-2002} for discussion), while nonassociative variants are better for modeling linguistic tree-structures. That is why all of them are worth studying.

\begin{definition}
Let \(L_1\) and \(L_2\) be two logics. We say that \(L_2\) is a \emph{(weak) conservative extension} of \(L_1\), if every theorem of \(L_2\) expressed in the language of \(L_1\) is also a theorem of \(L_1\). We say that \(L_2\) is a \emph{strong conservative extension} of \(L_1\), if the consequence relation of \(L_2\) restricted to the language of \(L_1\) coincides with the consequence relation of \(L_1\).
\end{definition}

\begin{remark}\label[remark]{rem::translation}
When the language of \(L_1\) is a sublanguage of \(L_2\), ``the language of \(L_1\)'' above is understood syntactically. More generally, as is the case for FNL and CyNBL in this paper, it should be understood via a fixed translation of \(L_1\)--formulas and sequents into \(L_2\). For this to be a reasonable translation to test conservativity against, it should agree with the algebraic semantics of both logics: every model of \(L_2\) should already be a model of \(L_1\), once the connectives of \(L_1\) are read off along the translation. The translation we use, denoted \(\tau\), is defined in \cref{sec::strong-conservativeness} (\cref{dfn::F}); it is exactly the syntactic counterpart of the fact, observed in \cref{dfn::fnlt-algebra,dfn::cynbl-algebra} below, that every CyNBL--algebra is an \FNLt--algebra once we set \(a \multimap b = a^\sim \oplus b\) and \(a \multimapinv b = a \oplus b^\sim\) -- the very definitions \(\tau\) uses for \(\multimap\) and \(\multimapinv\). As a further, purely syntactic, sanity check, \(\tau\) is also injective (\cref{sec::strong-conservativeness}), so distinct \FNLt--formulas are never conflated in CyNBL.
\end{remark}

Abrusci \cite{abrusci-classical-2002} proved that CyBL is not a conservative extension of FL, if we admit additive constants in these logics. But, without additive constants, BL and CyBL are conservative extensions of FL. Płaczek \cite{placzek-sequent-2024,placzek-applying-2026} proved that CyNBL is a strongly conservative extension of FNL without additive constants. For FNL with additive constants we can use a similar argument as in Abrusci \cite{abrusci-classical-2002} to show that CyNBL is not a conservative extension of FNL, even weakly.

The problem is that in CyBL or CyNBL one additive constant can be obtained from the other by negation, namely \(\top = \bot^\sim\) (\cref{dfn::cynbl-algebra}). This leads to the problem that the weaker language of FL (FNL) has a nonempty intersection with negated FL (FNL) formulas in CyBL (CyNBL). In this paper we ask and answer the following question: does conservativity hold if we allow FL (FNL) only one of the two additive constants \(\top\) and \(\bot\), while CyBL (CyNBL) still admits both? Here \(\top\) is the \emph{upper bound}, i.e. the greatest element of the lattice order, and \(\bot\) is the \emph{lower bound}, i.e. its least element (\cref{dfn::fnlt-algebra,dfn::cynbl-algebra}). If we keep only the lower bound \(\bot\) in FNL, the answer is negative, since the upper bound \(\top\) is already definable from \(\bot\) in FNL (as \(\bot \multimap \bot\); see the remark after \cref{dfn::fnlt-algebra}), so the same overlap of languages reappears. But if we keep only the upper bound \(\top\) in FNL, the answer is positive. We show that CyNBL, with both additive constants, is a strongly conservative extension of FNL with only one additive constant -- the upper bound. Moreover, we establish this result in the more general setting of the system \(\mathbf{S}_\Phi\), i.e. in the presence of an arbitrary set \(\Phi\) of nonlogical axioms (assumptions), which strengthens the result considerably.

\subsection{Notational remarks}

In the introduction above we deliberately skipped most of the symbols, since different traditions clash and we must specify the symbols we use; see Paoli \cite{paoli-substructural-2002} for discussion. In this paper we use \(\otimes\) for Lambek product (tensor, fusion) as well as for its linear logic counterpart. We use \(\multimap\) and \(\multimapinv\) for left and right implications (residuations) in Lambek Calculus, while we use \(\oplus\) for the par (coproduct) in linear logic. We use \(1\) for the multiplicative unit for \(\otimes\) in both logics and \(0\) for the multiplicative unit for \(\oplus\) in linear logics. All these connectives and constants are called multiplicative. The additive connectives (lattice operations) are \(\wedge\) and \(\vee\), while additive constants are \(\top\) and \(\bot\). The negation is denoted \(^\sim\) written as a suffix (for shorter formulas) and it is a meta-language operation, not a connective.

FNL stands for Full Nonassociative Lambek Calculus with both additive constants, while \FNLt stands for Full Nonassociative Lambek Calculus with only one additive constant \(\top\).

\subsection{Computer Science interpretation}
\label{sec::cs-interpretation}

The applications of Lambek calculi and linear logics in linguistics and computer science are widely studied; see Morrill \cite{morrill-categorial-2011} for comprehensive treatment. The question which may arise is -- what is the point of considering such an artificial logic as \FNLt. One of the applications of Lambek Calculus in computer science is type resolving in programming. We have a tree of symbols (represented as a sequence with parentheses), we assign types to every symbol and ask the question: what is the type of the result. Every compiler for strongly typed programming languages must do so.

In this context the product is an operation representing the tree structure of function (operation) calls, the residuations are for functions and operations to denote argument types and the resulting types. The lattice operations are for combination of types -- \(A \wedge B\) denotes a symbol of both type \(A\) and \(B\), e.g. an instance of the class which inherits from interfaces \(A\) and \(B\); \(A \vee B\) denotes a symbol which can be of type \(A\) or \(B\), but we do not know which one, e.g. addition of numbers which can take both integers and floats. \(1\) is a type of symbol which is neutral in the code and can be put anywhere, e.g. comments. In this natural interpretation of types, \(\top\) represents a general type of all symbols. For example in C\# it can be \verb+object+. The \(\bot\) would be the opposite -- a placeholder we can put in every context (like \(1\)), but it takes space and is treated like a true function argument or even a true function itself. It is unnatural for any programming language to have such a symbol. In C\# we have the \verb+dynamic+ type, but it cannot replace everything, or in Elm we have the \verb+Debug+ module, but it still cannot be put everywhere. Hence it is natural to consider the logic with only one additive constant -- the upper bound.

\subsection{Algebras}

We will briefly introduce certain algebras that serve as models for the logics examined in this paper. The reason to do so is for better understanding of the logics and for the sake of completeness. We will not use these algebras in the syntactic proofs, but their basic properties can easily show the basic properties of the considered logics.

We use the same symbols for operations and constants in the algebras as for the corresponding connectives and constants in the logics. This is not a real problem, since the context will always make it clear whether we are talking about the logic or about the algebra.

\begin{definition}[\FNLt--algebra]
	\label{dfn::fnlt-algebra}
	Let \(\mathbf{M} = (M, \otimes, \multimap, \multimapinv, \wedge, \vee, 1, \top, \leq)\) be a structure such that \(\otimes, \multimap, \multimapinv, \wedge, \vee\) are binary operations, \(1 \in M\) and \(\leq\) is a partial order on \(M\). We say that \(\mathbf{M}\) is an \emph{\FNLt--algebra}, if the following conditions hold:
	\begin{enumerate}[label={(\roman*)}]
		\item \((M, \wedge, \vee, \leq)\) is a lattice;
		\item \(\top \in M\) and \(a \leq \top\) for all \(a \in M\);
		\item \(a \otimes b \leq c\) iff \(a \leq c \multimapinv b\) iff \(b \leq a \multimap c\) for all \(a,b,c \in M\);
		\item \(1 \otimes a = a = a \otimes 1\) for all \(a \in M\).
	\end{enumerate}
\end{definition}

\FNLt--algebras serve as models of \FNLt. It is possible to extend \FNLt by introducing additive constant \(\bot\), which is the lower bound of the lattice.

Notice that if we introduce \(\bot\), then \(\top\) is definable as \(\bot \multimap \bot\). One shows that \(\bot \otimes a = \bot\) and hence \(a \leq \bot \multimap \bot\) for all \(a\). 

\begin{definition}[De Morgan negation]
	\label{dfn::de-morgan-negation}
	Let \((P, \leq)\) be a poset and let \(^\sim\) be a unary operation on \(P\) such that for all \(a, b \in P\):
	\begin{enumerate}[label={(\roman*)}]
		\item if \(a \leq b\), then \(b^\sim \leq a^\sim\);
		\item \(a^{\sim\sim} = a\).
	\end{enumerate}
	Such an operation \(^\sim\) is called a \emph{De Morgan negation}.
\end{definition}

\begin{definition}[CyNBL--algebra]
	\label[definition]{dfn::cynbl-algebra}
	Let \(\mathbf{M} = (M, \otimes, \wedge, \vee, {^\sim}, 1, \bot, \leq)\) be a structure such that \(\otimes, \wedge, \vee\) are binary operations, \(^\sim\) is a unary operation, 1 and \(\bot\) are constants and \(\leq\) is a partial order on \(M\). We say that \(\mathbf{M}\) is a \emph{CyNBL--algebra}, if the following conditions hold:
	\begin{enumerate}[label={(\roman*)}]
		\item \(^\sim\) is a De Morgan negation;
		\item \((M, \wedge, \vee, \leq)\) is a lattice;
		\item \(a \otimes b \leq c\) iff \(b \otimes c^\sim \leq a^\sim\) iff \(c^\sim \otimes a \leq b^\sim\) for all \(a,b,c \in M\);
		\item \(1 \otimes a = a = a \otimes 1\) for all \(a \in M\);
		\item \(\bot \leq a\) for all \(a \in M\).
	\end{enumerate}
\end{definition}

CyNBL--algebras serve as models of CyNBL. One defines \(a \oplus b = (b^\sim \otimes a^\sim)^\sim\) and \(0 = 1^\sim\) and \(\top = \bot^\sim\).

It can be proved that every CyNBL--algebra is also an \FNLt--algebra. We define \(a \multimap b = a^\sim \oplus b\) and \(a \multimapinv b = a \oplus b^\sim\). One checks that the condition \cref{dfn::fnlt-algebra}(iii) holds.

\subsection{Formalization}
In this work, all statements are proved constructively by structural induction with carefully chosen measures.
To ensure the correctness of the proofs, we formalize the proof systems \FNLt and the auxiliary system for CyNBL and the main results of the paper in the proof assistant Agda~\cite{Bove2009Agda}.
For an online introduction, we refer the readers to the \href{https://agda.readthedocs.io/en/stable/getting-started/what-is-agda.html}{Agda documentation}.
Each of the formalized contents is labelled by \agdahen\ that links to the corresponding line of the code.

This formalization is an extension of the one in Veltri and Wan \cite{VW2025Tableaux}. We utilize the case distinction type to compute all possible cases arising from equations of substituted trees. 
\section{Sequent systems}

In this section we present three sequent systems: one for \FNLt and two for CyNBL. The systems of CyNBL have already been considered in Płaczek \cite{placzek-sequent-2024}, so for the sake of completeness we recall useful properties of these systems without proof. The systems for CyNBL are one--sided (left--sided), while the system for \FNLt is two--sided (intuitionistic).

Both \FNLt and CyNBL have their own languages we define in the respective sections. In the sequents (both one--sided and two--sided) formulas are structured in \emph{bunches} (or \emph{formula-trees}). Bunches are elements of the free unital groupoid generated by the set of formulas. The neutral element of this unital groupoid is called an empty bunch and denoted \(\epsilon\). We represent bunches using capital Greek letters and formulas using capital Latin letters.

Having bunches we must define the \emph{context}. An \emph{anonymous variable} is a unique formula represented as \(\_\), serving as a placeholder for substitution. It is important to note that if a bunch contains multiple anonymous variables, they are considered distinct, even if they share the same symbol. A \emph{context} is a bunch with an anonymous variable. Contexts are denoted by \(\Gamma[\_]\), and when we perform the substitution of \(\Delta\) in place of \(\_\), we represent it as \(\Gamma[\Delta]\).

\subsection{Full Nonassociative Lambek Calculus with \texorpdfstring{\(\top\)}{⊤}}

Let \(\mathscr{V}\) be an arbitrary, countable set of variables. We construct the set of \emph{\FNLt--formulas} from \(\mathscr{V}\) by the binary connectives \(\otimes, \multimap, \multimapinv, \wedge, \vee\) and the constants \(1\) and \(\top\) (\agdaref{Fma.agda\#L163}). We denote the set of all \FNLt--formulas by \(\mathscr{F}_\FNLt\).

An \emph{\FNLt--sequent} is a pair \(\Gamma \Rightarrow A\), where \(\Gamma\) is an \FNLt--bunch and \(A\) is an \FNLt--formula. We denote the set of all \FNLt--sequents by \(\mathscr{L}_\FNLt\).

\renewcommand{\arraystretch}{2}

\noindent The axioms and the rules of \FNLt are as follows (\agdaref{FNL.agda\#L24}):
\begin{rules}
	\text{(id)} & \AxiomC{\(\)}\UnaryInfC{\(A \Rightarrow A\)}\DisplayProof & \FNLcut & \AxiomC{\(\Gamma \Rightarrow A \)}\AxiomC{\( \Delta[A] \Rightarrow C\)}\BinaryInfC{\(\Delta[\Gamma] \Rightarrow C\)}\DisplayProof\\
	& & \FNLRtotop & \AxiomC{\(\)}\UnaryInfC{\(\Gamma \Rightarrow \top\)}\DisplayProof\\
	\FNLotimesR & \AxiomC{\(\Gamma[(A, B)] \Rightarrow C\)}\UnaryInfC{\(\Gamma[A \otimes B] \Rightarrow C\)}\DisplayProof &
	\FNLRtootimes & \AxiomC{\(\Gamma \Rightarrow A \)}\AxiomC{\( \Delta \Rightarrow B\)}\BinaryInfC{\(\Gamma, \Delta \Rightarrow A \otimes B\)}\DisplayProof\\
	\FNLmultimapR & \AxiomC{\(\Gamma[B] \Rightarrow C \)}\AxiomC{\( \Theta \Rightarrow A\)}\BinaryInfC{\(\Gamma[(\Theta, A \multimap B)] \Rightarrow C\)}\DisplayProof &
	\FNLRtomultimap & \AxiomC{\(A, \Gamma \Rightarrow B\)}\UnaryInfC{\(\Gamma \Rightarrow A \multimap B\)}\DisplayProof\\
	\FNLmultimapinvR & \AxiomC{\(\Gamma[A] \Rightarrow C \)}\AxiomC{\( \Theta \Rightarrow B\)}\BinaryInfC{\(\Gamma[(A \multimapinv B, \Theta)] \Rightarrow C\)}\DisplayProof &
	\FNLRtomultimapinv & \AxiomC{\(\Gamma, B \Rightarrow A\)}\UnaryInfC{\(\Gamma \Rightarrow A \multimapinv B\)}\DisplayProof\\
	\FNLoneR & \AxiomC{\(\Gamma[\Delta] \Rightarrow C\)}\UnaryInfC{\(\Gamma[(1, \Delta)] \Rightarrow C\)}\DisplayProof \quad \AxiomC{\(\Gamma[\Delta] \Rightarrow C\)}\UnaryInfC{\(\Gamma[(\Delta, 1)] \Rightarrow C\)}\DisplayProof &
	\FNLRtoone & \AxiomC{\(\)}\UnaryInfC{\(\epsilon \Rightarrow 1\)}\DisplayProof\\
	\FNLveeR & \AxiomC{\(\Gamma[A] \Rightarrow C \)}\AxiomC{\( \Gamma[B] \Rightarrow C\)}\BinaryInfC{\(\Gamma[A \vee B] \Rightarrow C\)}\DisplayProof &
	\FNLRtovee & \AxiomC{\(\Gamma \Rightarrow A_i\)}\UnaryInfC{\(\Gamma \Rightarrow A_1 \vee A_2\)}\DisplayProof(i=1,2)\\
	\FNLwedgeR & \AxiomC{\(\Gamma[A] \Rightarrow C\)}\UnaryInfC{\(\Gamma[A \wedge B] \Rightarrow C\)}\DisplayProof \quad \AxiomC{\(\Gamma[B] \Rightarrow C\)}\UnaryInfC{\(\Gamma[A \wedge B] \Rightarrow C\)}\DisplayProof &
	\FNLRtowedge & \AxiomC{\(\Gamma \Rightarrow A \)}\AxiomC{\( \Gamma \Rightarrow B\)}\BinaryInfC{\(\Gamma \Rightarrow A \wedge B\)}\DisplayProof
\end{rules}

\FNLt is strongly complete with respect to \FNLt--algebras. A valuation is a homomorphism \(\mu\) of a free algebra of \FNLt--formulas to a \FNLt--algebra extended by the following properties: \(\mu((\Gamma, \Delta)) = \mu(\Gamma) \otimes \mu(\Delta)\) and \(\mu(\epsilon) = 1\). A sequent \(\Gamma \Rightarrow A\) is satisfied by a valuation \(\mu\), if \(\mu(\Gamma) \leq \mu(A)\). One proves strong completeness in a standard way, using Lindenbaum--Tarski algebras. Since every CyNBL--algebra is an \FNLt--algebra, CyNBL is an extension of \FNLt.

\subsection{Cyclic Nonassociative Bilinear Logic}

Let \(\mathscr{V}\) be an arbitrary, countable set of variables. We define the set of atoms \(\mathscr{V}'\) as follows: if \(p \in \mathscr{V}\), then both \(p\) and \(p^\sim\) are elements of \(\mathscr{V}'\). Variables in this set are referred to as \emph{positive} atoms, while their negations (\(p^\sim\)) are termed \emph{negative} atoms. We construct the set of \emph{CyNBL--formulas} from \(\mathscr{V}'\) by the binary connectives: \(\otimes\), \(\oplus\), \(\wedge\) and \(\vee\) and the constants 1, 0, \(\top\) and \(\bot\) (\agdaref{Fma.agda\#47}). We denote the set of all CyNBL--formulas by \(\mathscr{F}_\text{CyNBL}\).

It is worth noting that we do not treat negation as a connective. The systems we introduce adhere to the negation normal form, meaning that negations only appear in the form of atoms.

We define the metalanguage negation \({^\sim}\) (\agdaref{Fma.agda\#57}):
\[\begin{array}{cc}
		(p)^\sim = p^\sim                         & (p^\sim)^\sim = p                         \\
		1^\sim = 0                                & 0^\sim = 1                                \\
		\top^\sim = \bot                          & \bot^\sim = \top                          \\
		(A \otimes B)^\sim = B^\sim \oplus A^\sim & (A \oplus B)^\sim = B^\sim \otimes A^\sim \\
		(A \wedge B)^\sim = A^\sim \vee B^\sim    & (A \vee B)^\sim = A^\sim \wedge B^\sim
	\end{array}\]
One notices \(A^{\sim\sim} = A\) for all formulas \(A\).

A \emph{CyNBL--sequent} is defined as any nonempty bunch. We denote the set of all CyNBL--sequents by \(\mathscr{L}_\text{CyNBL}\).

\noindent The axioms of CyNBL are:
\begin{rules}
	\text{(a-id)} & \AxiomC{\(\)}\UnaryInfC{\(p, p^\sim\)}\DisplayProof & \text{(a-0)} & \AxiomC{\(\)}\UnaryInfC{\(\,0\,\)}\DisplayProof\\
	\text{(a-\(\bot\))} & \AxiomC{\(\)}\UnaryInfC{\(\Gamma[\bot]\)}\DisplayProof
\end{rules}
\noindent The introduction rules (rules introducing connectives and constants) are:
\begin{rules}
	\rrtensor & \AxiomC{\(\Gamma[(A,B)]\)}\UnaryInfC{\(\Gamma[A \otimes B]\)}\DisplayProof & \rrone & \AxiomC{\(\Gamma[\Delta]\)}\UnaryInfC{\(\Gamma[(1, \Delta)]\)}\DisplayProof \quad \AxiomC{\(\Gamma[\Delta]\)}\UnaryInfC{\(\Gamma[(\Delta, 1)]\)}\DisplayProof\\
	\roplusone & \AxiomC{\(\Gamma[B]\)}\AxiomC{\( \Delta, A\)}\BinaryInfC{\(\Gamma[(\Delta, A \oplus B)]\)}\DisplayProof & \roplustwo & \AxiomC{\(\Gamma[A]\)}\AxiomC{\( B, \Delta\)}\BinaryInfC{\(\Gamma[(A \oplus B, \Delta)]\)}\DisplayProof\\
	\rwedge & \AxiomC{\(\Gamma[A]\)}\UnaryInfC{\(\Gamma[A \wedge B]\)}\DisplayProof \quad \AxiomC{\(\Gamma[A]\)}\UnaryInfC{\(\Gamma[B \wedge A]\)}\DisplayProof & \rvee & \AxiomC{\(\Gamma[A] \)}\AxiomC{\( \Gamma[B]\)}\BinaryInfC{\(\Gamma[A \vee B]\)}\DisplayProof\\
\end{rules}
In (r-1) we assume \(\Delta \neq \epsilon\).

\noindent The structural rules and the cut rule are:
\begin{rules}
	\rshift & \AxiomC{\((\Gamma, \Delta), \Theta\)}\UnaryInfC{\(\Gamma, (\Delta, \Theta)\)}\DisplayProof & \rcyc & \AxiomC{\(\Gamma, \Delta\)}\UnaryInfC{\(\Delta, \Gamma\)}\DisplayProof\\
	\rcut & \AxiomC{\(\Gamma[A] \)}\AxiomC{\( \Delta, A^\sim\)}\BinaryInfC{\(\Gamma[\Delta]\)}\DisplayProof
\end{rules}

The rules \rshift and \rcyc are reversible. For \rcyc it is obvious. To obtain the reversed \rshift we consecutively apply \rcyc, \rshift, \rcyc, \rshift and again \rcyc. The reversibility of these rules is an important fact used in the omitted proofs.

The models of CyNBL are CyNBL--algebras. A valuation is a homomorphism \(\mu\) of a free algebra of CyNBL--formulas to a CyNBL--algebra extended by the following properties: \(\mu((\Gamma, \Delta)) = \mu(\Gamma) \otimes \mu(\Delta)\) and \(\mu(\epsilon) = 1\). A sequent \(\Gamma\) is satisfied by a valuation \(\mu\), if \(\mu(\Gamma) \leq 0\).

CyNBL is strongly complete with respect to CyNBL--algebras. The \rshift rule expresses the condition \cref{dfn::cynbl-algebra}(iii). The rule \rcyc expresses the fact that we have a De Morgan negation. One proves strong completeness in the usual way, using Lindenbaum--Tarski algebras.

\subsection{Auxiliary system for CyNBL \texorpdfstring{\agdaref{CyNBL.agda\#L68}}{}}

Here we provide a special sequent system, which is equivalent to the system above, but has some special properties. First, we replace structural rules with an axiom and two new rules, because structural rules are hard to consider in the syntactic proofs. Then, we add \emph{assumption rules}, which are special versions of the cut rule, where instead of the cut formula we use nonlogical axioms.

To simplify the reasoning, we assume all the nonlogical axioms (assumptions) are of the form \(C, D^\sim\). This is not a real restriction, since every CyNBL--sequent can be represented in this form. The rule \rrtensor is reversible, so we can replace every bunch by a formula, where we replace the commas with \(\otimes\). If the sequent is a single formula, we can always add \(1\). Hence, every CyNBL--sequent can be represented in the form \(C, D^\sim\).

Let \(\Phi\) be a set of sequents of the form \(C, D^\sim\). We define the system \(\mathbf{S}_\Phi\). The system \(\mathbf{S}_\Phi\) has all axioms and introduction rules of CyNBL. We add the following rules and axioms:
\begin{rules}

	\text{(a-id2)} & \AxiomC{\(\)}\UnaryInfC{\(p^\sim, p\)}\DisplayProof\\
	\roplusthree & \AxiomC{\(A, \Gamma\)}\AxiomC{\( B, \Delta\)}\BinaryInfC{\(A \oplus B, (\Delta, \Gamma)\)}\DisplayProof &
	\roplusfour & \AxiomC{\(\Gamma, A \)}\AxiomC{\( \Delta, B\)}\BinaryInfC{\((\Delta, \Gamma), A \oplus B\)}\DisplayProof
\end{rules}
In \roplusthree and \roplusfour we assume \(\Gamma, \Delta\) are nonempty; otherwise they are special cases of \roplustwo and \roplusone.

\noindent For every \((C, D^\sim) \in \Phi\) we add the \emph{assumption rules}:
\begin{rules}
	\rassmone & \AxiomC{\(\Gamma[D] \)}\AxiomC{\(\Delta, C^\sim\)}\BinaryInfC{\(\Gamma[\Delta]\)}\DisplayProof &
	\rassmtwo & \AxiomC{\(\Gamma[C^\sim] \)}\AxiomC{\(\Delta, D\)}\BinaryInfC{\(\Gamma[\Delta]\)}\DisplayProof 
\end{rules}
From now on we denote by \(\provS \Gamma\) the provability of \(\Gamma\) in \(\mathbf{S}_\Phi\).

The aim of the structural rules \rshift and \rcyc is to allow us to rearrange the bunches in the sequents. Since CyNBL is neither associative nor commutative, this rearrangement allows only moving the outer parenthesis and cyclically permuting the bunches. This rearrangement creates equivalent sequents, i.e. sequents that are provable in \(\mathbf{S}_\Phi\) iff the original ones are provable. The following proposition states this fact.

The following results and their proofs can be found in Płaczek \cite{placzek-sequent-2024}. The first statement originates from a result in Buszkowski \cite{buszkowski-classical-2016}.

We define the relation \(\Gamma \sim \Delta\), which holds for the CyNBL--bunches \(\Gamma\) and \(\Delta\), if \(\Delta\) can be derived from \(\Gamma\) by finitely many applications of \rcyc and \rshift. Since both \rcyc and \rshift are reversible, this relation is an equivalence relation.
\begin{proposition}\label[proposition]{prop::main}
	\begin{enumerate}[label={(\roman*)}]
		\item \label{prop::sim}
			Let \(\Gamma[\_]\) be a CyNBL--context. Then, there exists a unique CyNBL--bunch \(\Delta\) such that \(\Gamma[\_] \sim (\Delta, \_)\).
		\item \label{prop::simcor}
			Let \(\Gamma[\_] \sim (\Delta, \_)\) and let \(\Theta\) be a substructure of \(\Gamma[\_]\), which does not contain this occurrence of \_ (but it can contain occurrences of other anonymous variables). Then, the reduction preserves \(\Theta\).
		\item \label{prop::closedsub}
			The relation \(\sim\) is closed under substitution.
		\item \label{prop::simproof}
			Let \(\Gamma \sim \Delta\). Then \(\Gamma\) is provable in \(\mathbf{S}_\Phi\) iff \(\Delta\) is provable. The height of the proof of \(\Gamma\) is equal to the height of the proof of \(\Delta\) (\agdaref{AdmShiftCyc.agda\#L683}).
		\item \label{prop::admiss}
			The rules \rshift (\agdaref{AdmShiftCyc.agda\#627}) and \rcyc (\agdaref{AdmShiftCyc.agda\#631}) are admissible in \(\mathbf{S}_\Phi\).
	\end{enumerate}
\end{proposition}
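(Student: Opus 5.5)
The five parts build on each other, so I will prove them in order: first the purely combinatorial facts about \(\sim\) in (i)--(iii), then the proof-theoretic facts in (iv)--(v). For \ref{prop::sim}, existence goes by induction on the depth of the hole in \(\Gamma[\_]\). If \(\Gamma[\_] = \_\), the hole already sits at top level; reading a bare \(\_\) as \((\epsilon,\_)\), we take \(\Delta=\epsilon\). If \(\Gamma[\_] = (\Gamma_1[\_], \Gamma_2)\), one application of \rcyc gives \((\Gamma_2, \Gamma_1[\_])\). We then descend into \(\Gamma_1[\_] = (\Theta_1[\_],\Theta_2)\) or \((\Theta_1,\Theta_2[\_])\). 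The reversed \rshift turns \((\Gamma_2,(\Theta_1,\Theta_2))\) into \(((\Gamma_2,\Theta_1),\Theta_2)\), and a \rcyc brings the side containing the hole to the outermost right. In this way the hole rises one level, and the induction hypothesis finishes the argument. For uniqueness, I will use an invariant: each bunch determines a planar tree, namely the bunch with a root edge added, viewed up to rotation. Both \rshift and \rcyc are rerootings of this same unrooted planar binary tree with a cyclic order. Hence \((\Delta,\_)\sim(\Delta',\_)\) forces \(\Delta\) and \(\Delta'\) to be the same branch hanging opposite the leaf \(\_\), so \(\Delta=\Delta'\).

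Part \ref{prop::simcor} follows from the same tree picture. Every step in the construction above only rotates at nodes lying on the path from the root to \(\_\). Therefore any subtree \(\Theta\) disjoint from that path reappears intact as a subtree. Part \ref{prop::closedsub} is immediate, because \rshift and \rcyc act only on the outer bracketing of \(\Gamma,\Delta,\Theta\) as metavariables. Substituting into anonymous variables, or into the atoms of a bunch, therefore commutes with each single step.

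For \ref{prop::simproof}, it suffices to show that one step of \rcyc and one step of \rshift preserve provability together with the height of the derivation; reversibility then handles the converse direction. I will argue by induction on the height of the derivation of \(\Gamma\), with a case analysis on the last rule. The axioms (a-id), (a-id2) and (a-0) are closed under rotation, and (a-\(\bot\)) holds in every context. Every rule acting inside a context \(\Gamma[\ldots]\) can be re-rooted using \ref{prop::closedsub}: apply the induction hypothesis to the premise, then reapply the rule in the rotated context. The delicate cases are the rules whose conclusion has a fixed top-level shape, namely \roplusone, \roplustwo, \roplusthree and \roplusfour. When the active \(\oplus\)-formula sits at the top level, rotating the conclusion may require switching among these four variants. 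The system \(\mathbf{S}_\Phi\) contains \roplusthree and \roplusfour precisely so that each such rotation is again an instance of one of the four rules, with the premises rotated by the induction hypothesis (each premise being of the form \(X,A\) or \(A,X\), whose rotations stay within this family). I expect this bookkeeping, matching the \(\oplus\)-rules against all positions of the shift, to be the main obstacle. The assumption rules behave like the context rules, since their second premise has a fixed shape \(\Delta,C^\sim\) that is never rotated.

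Part \ref{prop::admiss} is then a direct corollary of \ref{prop::simproof}. A single application of \rshift or \rcyc produces a bunch \(\sim\)-equivalent to its premise, so the conclusion is provable whenever the premise is.
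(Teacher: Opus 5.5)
There is a genuine gap in part (iv), in your treatment of the assumption rules. It is not true that \rassmone and \rassmtwo behave like context rules whose premise \(\Delta, C^\sim\) is never rotated. A rotation of the conclusion \(\Gamma[\Delta]\) can cut into the substituted bunch \(\Delta\). The result is then no longer of the form \(\Gamma'[\Delta]\), so there is no rotated context in which to reapply the same rule.

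For instance, take \(((\Delta_1,\Delta_2),X)\), obtained by \rassmone from \((D,X)\) and \(((\Delta_1,\Delta_2),C^\sim)\). It shifts to \((\Delta_1,(\Delta_2,X))\), which in general has no sub-bunch \((\Delta_1,\Delta_2)\). What works is \rassmtwo with context \((\Delta_1,(\Delta_2,\_))\) and substituted bunch \(X\), from the premises \((\Delta_1,(\Delta_2,C^\sim))\) and \((X,D)\). These are rotations of the two original premises with their roles exchanged, so the premise \(\Delta, C^\sim\) does get rotated, into a context. When \(\Gamma[\_]=\_\) you even need the instance with empty substituted bunch. The shift of the conclusion \(((\Delta_1,\Delta_2),\Delta_3)\) is \rassmtwo with context \((\Delta_1,(\Delta_2,(\Delta_3,\_)))\) filled by \(\epsilon\), and its first premise lies two shifts away from \((((\Delta_1,\Delta_2),\Delta_3),C^\sim)\).

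In general both assumption rules express one configuration: an edge separating a side that receives \(C^\sim\) from a side that receives \(D\). Which rule applies depends on which side of that edge the top-level comma of the rotated sequent falls. This is why \(\mathbf{S}_\Phi\) contains both rules, and this switch has to be part of your case analysis. The same \(\epsilon\)-instances are also what make your \(\oplus\) bookkeeping close up. For example, shifting \(((\Delta_1,\Delta_2),A\oplus B)\), obtained by \roplusone from \(B\) and \(((\Delta_1,\Delta_2),A)\), requires \roplustwo with empty right-hand bunch.

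Separately, the invariant you use for uniqueness in (i) is misidentified. If the tree is the bunch with an extra root edge to an outer leaf \(o\), then \rshift is not a rerooting of it. Indeed, \(((A,B),C)\) gives the split \(\{A,B\}\mid\{C,o\}\), while \((A,(B,C))\) gives \(\{A,o\}\mid\{B,C\}\). The invariant that works suppresses the root instead: read the top-level comma as an edge joining the two top-level sub-bunches, and keep the cyclic order at each internal vertex. Then \rshift moves this distinguished edge to an adjacent one and \rcyc reverses its direction. A bunch is recovered from the tree together with a directed edge. Since the leaf \(\_\) has only one incident edge, \((\Delta,\_)\sim(\Delta',\_)\) forces \(\Delta=\Delta'\). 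The same picture also organizes (ii) and the case analysis in (iv). With these two repairs, your plan agrees with the height-preserving induction on derivations that the paper omits and delegates to \cite{placzek-sequent-2024} and the Agda development.
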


The proof of \cref{prop::main} is similar to the one in Płaczek \cite{placzek-sequent-2024}, so we omit it.

The cut rule is also admissible in \(\mathbf{S}_\Phi\). The cut--elimination property for CyNBL is a known fact, but we need a stronger result, which is the admissibility of cut in \(\mathbf{S}_\Phi\) for every \(\Phi\). This result allows us to obtain not only cut elimination for pure CyNBL, but also something like partial cut elimination for CyNBL with nonlogical axioms (assumptions). This result will be useful later.

\begin{lemma}[\agdaref{CutElim.agda}]
	\label[lemma]{lem::cut-elimination}
	The rule \rcut is admissible in \(\mathbf{S}_\Phi\), i.e. if \(\provS \Gamma[A]\) and \(\provS \Delta, A^\sim\), then \(\provS \Gamma[\Delta]\).
\end{lemma}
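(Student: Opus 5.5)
We prove admissibility of \rcut by the standard Gentzen-style double induction: the primary measure is the complexity of the cut formula \(A\) (the number of connectives and constants in it), and the secondary measure is the sum of the heights of the cut-free proofs of \(\Gamma[A]\) and \(\Delta, A^\sim\). Since \(A^{\sim\sim} = A\) and negation preserves complexity, the two premises are symmetric in an essential way. Using \cref{prop::main}\ref{prop::simproof} and \ref{prop::admiss}, we may rotate any sequent without changing proof height. In particular, \(\Gamma[A]\) may be replaced by \((\Gamma', A)\) with \(\Gamma[\_] \sim (\Gamma', \_)\) by \cref{prop::main}\ref{prop::sim}. It therefore suffices to show: if \(\provS \Gamma', A\) and \(\provS \Delta, A^\sim\), then \(\provS \Gamma', \Delta\) (up to \(\sim\)). \Cref{prop::main}\ref{prop::closedsub} lets us transport the conclusion back to \(\Gamma[\Delta]\).

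The case analysis runs on the last rules of the two premises.

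\emph{Axiom cases.} If one premise is (a-id) or (a-id2), the conclusion is the other premise, up to \(\sim\). If the premise containing \(A\) is (a-\(\bot\)) with the \(\bot\) lying in the context rather than being \(A\), the conclusion is again an instance of (a-\(\bot\)). If \(A=\top\), then \(A^\sim = \bot\), and \(\top\) has no introduction rule. So the premise \(\Gamma',\top\) must come from a rule in which \(\top\) is passive, and we permute the cut upward into it. If \(A = 0\) via (a-0), the other premise \(\Delta, 1\) yields \(\Delta\) by inverting \rrone; this inversion is proved by a height-preserving induction.

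\emph{Non-principal cases.} If \(A\) is not principal in the last rule of one premise, we permute the cut upward into that premise. The height decreases, so the induction hypothesis applies. This covers all introduction rules, the rules \roplusthree and \roplusfour, and the assumption rules \rassmone and \rassmtwo. Applying the assumption rules to a context is exactly why nonlogical axioms cause no problem: the cut simply commutes past them, because they never introduce a formula that could be a cut formula in principal position. The additive case \rvee is the usual one, in which the other premise is duplicated; this is harmless because the measure is height, not size.

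\emph{Principal cases.} When \(A\) is principal on both sides, we reduce to cuts on subformulas, using the primary induction.
\begin{itemize}
\item For \(A = B \wedge C\) against \(A^\sim = B^\sim \vee C^\sim\), one cut on \(B\) (or on \(C\)) suffices.
\item For \(A = B\otimes C\) introduced by \rrtensor from \((\Gamma', (B, C))\), the dual \(C^\sim \oplus B^\sim\) is introduced by one of \roplusone through \roplusfour. We then perform two successive cuts, on \(B\) and on \(C\), interleaving \(\sim\)-rearrangements to bring each cut formula into the required position \((\Theta, X)\).
\item The cases \(1\) against \(0\) are immediate.
\end{itemize}

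I expect the main obstacle to be the \(\otimes/\oplus\) principal case together with the bookkeeping of contexts. The four \(\oplus\)-rules put the splitting of the context in different places, and one has to check that every resulting bunch is \(\sim\)-equivalent to \(\Gamma[\Delta]\). This relies on the uniqueness in \cref{prop::main}\ref{prop::sim} and the preservation of substructures in \ref{prop::simcor}, used to locate the side bunches after rotation. I would handle it by first normalizing both premises via \(\sim\) so that the cut formula is rightmost. I would then prove one representative subcase, for example \roplustwo, explicitly, and derive the other three subcases from it by \rcyc symmetry.
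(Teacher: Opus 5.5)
Your proposal is correct and follows the same route as the paper: a double induction on the complexity of the cut formula and on proof heights, commuting the cut upward past every rule in which the cut formula is passive (the assumption rules included, since they have no principal formula), and reducing principal--principal cuts to cuts on immediate subformulas. The \(\sim\)-normalisation and the \rrone-inversion you use for the (a-0) case are convenient ways of organising bookkeeping that the paper leaves implicit. One piece of that bookkeeping you do not mention: the nonemptiness side conditions of \rrone, \roplusthree and \roplusfour can fail when the partner's context \(\Delta\) is empty; this is handled routinely by falling back to \roplusone/\roplustwo, or by attaching the \(1\) to a neighbouring bunch.
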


Again, the proof of this lemma can be found in Płaczek \cite{placzek-sequent-2024}. The proof is a standard induction on the complexity of the cut formula \(A\) and the heights of the proofs of \(\Gamma[A]\) and \(\Delta, A^\sim\). The main idea is to show that if \(A\) is not an axiom, then we can permute the cut rule with the introduction rules, until we obtain cuts with smaller cut formulas.

\begin{proposition}\label[proposition]{prop::Phi=SPhi}
	\(\Phi \vdash \Gamma\) iff \(\provS \Gamma\).
\end{proposition}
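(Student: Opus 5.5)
Here \(\Phi \vdash \Gamma\) means that \(\Gamma\) is derivable in the original system CyNBL (axioms, introduction rules, \rshift, \rcyc and \rcut) with the sequents of \(\Phi\) added as extra axioms. I would prove the two directions separately, each by induction on the height of the given derivation.

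\emph{From \(\mathbf{S}_\Phi\) to \(\Phi \vdash\).} It suffices to check that every axiom and rule of \(\mathbf{S}_\Phi\) beyond those of CyNBL is derivable in CyNBL\(+\Phi\).
\begin{itemize}
\item (a-id2): from (a-id) \(p, p^\sim\) apply \rcyc.
\item \roplusthree and \roplusfour: derive them from \roplustwo and \roplusone plus \rshift and \rcyc. For instance, from \(B, \Delta\) and \(A, \Gamma\), the rule \roplustwo with \(\Gamma[\_]=(\_,\Gamma)\) yields \((A \oplus B, \Delta), \Gamma\). Then \rshift gives \(A\oplus B, (\Delta, \Gamma)\). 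Symmetric rearrangements handle \roplusfour.
\item \rassmone: from \(\Gamma[D]\) and \((C, D^\sim)\in\Phi\), cut on \(D\) to get \(\Gamma[C]\). This uses the axiom rotated by \rcyc to \(D^\sim, C\), and then \(C, D^\sim \sim D^\sim, C\) puts it in the form \(\Delta', D^\sim\) needed for \rcut. A second cut of \(\Gamma[C]\) against \(\Delta, C^\sim\) then gives \(\Gamma[\Delta]\).
\item \rassmtwo: cut \(\Gamma[C^\sim]\) with the axiom \(C, D^\sim\), using \(C^{\sim\sim}=C\), to get \(\Gamma[D^\sim]\). Then cut with \(\Delta, D\) to get \(\Gamma[\Delta]\).
\end{itemize}
These are routine bookkeeping with \rcyc and \rshift, which are primitive in the original system.

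\emph{From \(\Phi \vdash\) to \(\mathbf{S}_\Phi\).} The rules of CyNBL are rules of \(\mathbf{S}_\Phi\), except \rshift, \rcyc and \rcut. These are admissible by \cref{prop::main}\ref{prop::admiss} and \cref{lem::cut-elimination}, so they can be removed by induction on the derivation.

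The remaining obligation, which I expect to be the main obstacle, is showing \(\provS C, D^\sim\) for each \((C,D^\sim)\in\Phi\).
\begin{enumerate}
\item First establish the identity lemma \(\provS A, A^\sim\) for all formulas \(A\), by induction on \(A\). Use (a-id)/(a-id2) for atoms, (a-0) with \rrone for constants, (a-\(\bot\)) for \(\top/\bot\), and the introduction rules together with \rshift/\rcyc for compound formulas.
\item Derive \(D, D^\sim\) in this way and rewrite it up to \(\sim\) as \(D^\sim, D\).
\item Apply \rassmtwo with \(\Gamma[\_] = (\_, D^\sim)\): from \(\Gamma[C^\sim] = C^\sim, D^\sim\)? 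No; the context must carry \(C^\sim\) in the hole. Instead use \rassmone with \(\Gamma[\_]=(\_, D^\sim)\) and \(\Delta\) such that \(\Delta, C^\sim\) is provable. Taking \(\Delta = C\) gives \(\provS C, C^\sim\), which is the identity lemma. The premise \(\Gamma[D] = D, D^\sim\) is also the identity lemma.
\item Conclude \(\Gamma[C] = C, D^\sim\).
\end{enumerate}

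The delicate part is the identity lemma for \(\otimes/\oplus\). Proving \(A\otimes B, A^\sim\) with \(A^\sim = B^\sim\oplus A^\sim\), read correctly as \(A\otimes B, B^\sim\oplus A^\sim\), requires first applying \rrtensor in reverse to get the premise \((A,B), B^\sim \oplus A^\sim\). That premise is obtained from \(A,A^\sim\) and \(B,B^\sim\) via \roplustwo/\roplusone and rearrangement by \cref{prop::main}\ref{prop::simproof}. Care is needed with the empty-bunch side conditions of \rrone and \roplusthree/\roplusfour.
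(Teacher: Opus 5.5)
Your proof is correct and follows the route the paper relies on. The paper defers the proof to Płaczek's earlier work, but places \cref{prop::main}\ref{prop::admiss} and \cref{lem::cut-elimination} immediately before the statement for exactly this use. In one direction, the extra rules of $\mathbf{S}_\Phi$ are derivable using \rcyc, \rshift and \rcut. In the other, the structural rules and cut are admissible, and each $(C, D^\sim)\in\Phi$ is obtained by \rassmone from the identity instances $D, D^\sim$ and $C, C^\sim$.

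One minor bookkeeping slip: in the \rassmone case no rotation is needed, since $C, D^\sim$ already has the form $\Delta', D^\sim$. It is the \rassmtwo case that needs \rcyc, to turn $C, D^\sim$ into $D^\sim, C$ before cutting on $C^\sim$.
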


Provability in pure CyNBL is equivalent to provability in \(\mathbf{S}_\emptyset\). Hence, CyNBL admits cut elimination.

Now that we know the two systems are equivalent, we can use the system \(\mathbf{S}_\Phi\) as an easier system to work with. One of the standard methods of proving that a logic is a conservative extension of another logic is to show the subformula property. Here we have to deal with a more complicated case -- the languages are not the same (but are translatable) and we have nonlogical axioms. The following proposition states the \emph{generalized} subformula property for \(\mathbf{S}_\Phi\), which is a form of subformula property for logics with an arbitrary set of nonlogical axioms. This result is a consequence of the cut elimination for \(\mathbf{S}_\Phi\).

Let \(T\) be an arbitrary set of CyNBL--formulas. By a \emph{\(T\)--sequent} we mean a CyNBL--sequent containing only formulas from \(T\). By a \emph{\(T\)--proof} we mean a proof consisting of only \(T\)--sequents.

\begin{proposition}[generalized subformula property]
	\label[proposition]{prop::subformula-property}
	Let \(\Phi\) be a set of sequents of the form \(C, D^\sim\). Let \(\Gamma\) be a CyNBL--sequent and let \(T\) be a subformula--closed set such that every formula in \(\Gamma\) occurs in \(T\) and for every \((C, D^\sim) \in \Phi\) we have \(C, D, C^\sim, D^\sim \in T\). Then, \(\Gamma\) is provable in \(\mathbf{S}_\Phi\) iff it has a \(T\)--proof in \(\mathbf{S}_\Phi\).
\end{proposition}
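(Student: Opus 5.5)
The ``if'' direction is immediate, since a \(T\)--proof in \(\mathbf{S}_\Phi\) is in particular a proof in \(\mathbf{S}_\Phi\). For the ``only if'' direction we first apply \cref{lem::cut-elimination}. Given \(\provS \Gamma\), we obtain a proof of \(\Gamma\) in \(\mathbf{S}_\Phi\) with no application of \rcut. We may also assume it uses no \rshift or \rcyc, since \(\mathbf{S}_\Phi\) has no such primitive rules (by \cref{prop::main}\ref{prop::admiss} they are admissible, not primitive). It then remains to show, by induction on the height of this cut-free proof, that \emph{every} sequent occurring in it is a \(T\)--sequent. A proof whose conclusion is a \(T\)--sequent will thereby automatically be a \(T\)--proof.

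We argue top-down, rule by rule, and show that if the conclusion of a rule is a \(T\)--sequent then so are its premises. The axioms (a-id), (a-id2), (a-0), (a-\(\bot\)) have no premises. For the introduction rules \rrtensor, \rrone, \roplusone--\roplusfour, \rwedge, \rvee, every formula in a premise is either a formula of the conclusion (in the context \(\Gamma[\_]\) or in \(\Delta\)) or an immediate subformula of the principal formula. Since \(T\) is subformula--closed, the premises are \(T\)--sequents. This is the genuine subformula property. It relies on the fact that, in the one-sided formulation, the premises contain \(A\) and \(B\) themselves and not their negations; in \roplusone and \roplustwo the side premises are \(\Delta, A\) and \(B, \Delta\), so no negation is introduced.

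The only rules that can introduce formulas not occurring in the conclusion are the assumption rules, and these are exactly why the hypothesis on \(\Phi\) is needed. In \rassmone, with conclusion \(\Gamma[\Delta]\), the premises are \(\Gamma[D]\) and \(\Delta, C^\sim\). In \rassmtwo they are \(\Gamma[C^\sim]\) and \(\Delta, D\). The new formulas are among \(C, D, C^\sim, D^\sim\), which lie in \(T\) by assumption, and the remaining formulas come from \(\Gamma[\Delta]\). Hence all premises are \(T\)--sequents. This completes the induction.

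I expect the main obstacle to be not the induction itself but the preceding step: cut elimination must hold for \(\mathbf{S}_\Phi\) itself with the assumption rules kept primitive, which is exactly \cref{lem::cut-elimination}. I also need to confirm that the cut-free proof produced there contains no hidden uses of \rshift or \rcyc that reshuffle bunches. Even if such uses were present, \(\sim\) only rearranges the bunch structure and never changes the multiset of formula occurrences, so the \(T\)--property would be preserved along them anyway.
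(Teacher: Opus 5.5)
Your proposal is correct and follows the route the paper indicates; the paper only remarks that the result is a consequence of cut elimination for \(\mathbf{S}_\Phi\). As you argue, in every rule of \(\mathbf{S}_\Phi\) each premise formula is a formula of the conclusion, an immediate subformula of the principal formula, or one of \(D\), \(C^\sim\) supplied by \(\Phi\), so a top-down induction keeps every sequent inside \(T\).

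One small point: \rcut, \rshift and \rcyc are not primitive rules of \(\mathbf{S}_\Phi\), so every \(\mathbf{S}_\Phi\)--proof is already cut-free. \Cref{lem::cut-elimination} is therefore what links this property to \(\Phi \vdash \Gamma\) via \cref{prop::Phi=SPhi}, not a step your induction actually needs.
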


This property is really strong, because it shows that the proof--search procedure needs to check only sequents built from a finite set of formulas. This method is usually enough to show that a logic is a strongly conservative extension of another logic, if they share the same language. The following section will show that this method is also enough to show that CyNBL is a strongly conservative extension of \FNLt, even though they do not share the same language and we have nonlogical axioms.

\section{Strong conservativeness}
\label{sec::strong-conservativeness}

In this section we show that CyNBL is a strongly conservative extension of \FNLt. We define the translation of \FNLt--formulas to CyNBL--formulas and we show that if the translation of an \FNLt--sequent is provable in CyNBL, then the original sequent is provable in \FNLt. The challenge lies in recognizing the translated \FNLt--formulas among all CyNBL--formulas.

The results of this section have been partially proved in Płaczek \cite{placzek-sequent-2024}, but without additive constants. Here we provide the full proof, including the case with additive constants. We will underline the differences between the two cases, which are not trivial.

\begin{definition}[\agdaref{Translations.agda\#L15}]
	\label{dfn::F} We define two sets of CyNBL--formulas: \(\mathcal{F}_1\) and \(\mathcal{F}_2\). The former is the set of \FNLt--formulas translated into CyNBL and the latter is the set of negated translated \FNLt--formulas.
	\begin{enumerate}[label={(\roman*)}]
		\item For every \(p \in \mathcal{V}\), \(p \in \mathcal{F}_1\) and \(p^\sim \in \mathcal{F}_2\).
		\item \(1, \top \in \mathcal{F}_1\) and \(0, \bot \in \mathcal{F}_2\).
		\item If \(A, B \in \mathcal{F}_1\), then \(A \otimes B, A \wedge B, A \vee B \in \mathcal{F}_1\).
		\item If \(A, B \in \mathcal{F}_2\), then \(A \oplus B, A \wedge B, A \vee B \in \mathcal{F}_2\).
		\item If \(A \in \mathcal{F}_1\) and \(B \in \mathcal{F}_2\), then \(A \oplus B, B \oplus A \in \mathcal{F}_1\) and \(A \otimes B, B \otimes A \in \mathcal{F}_2\).
		\item No other formula belongs to \(\mathcal{F}_1\) nor \(\mathcal{F}_2\).
	\end{enumerate}
\end{definition}

Notice that for every \(A \in \mathcal{F}_1\) its metalanguage negation \(A^\sim \in \mathcal{F}_2\) and conversely, if \(A \in \mathcal{F}_2\), then \(A^\sim \in \mathcal{F}_1\). Moreover, \(\mathcal{F}_1 \cap \mathcal{F}_2 = \emptyset\). We define \(\mathcal{F} = \mathcal{F}_1 \cup \mathcal{F}_2\). We see that \(\mathcal{F}\) is not the set of all CyNBL--formulas; e.g. \(p \vee p^\sim \notin \mathcal{F}\).

CyNBL is an extension of \FNLt. We define the translation \(\tau\) of \FNLt--sequents to CyNBL--sequents, inductively as follows (\agdaref{Translations.agda\#L78}):
\[
\begin{array}{@{}l@{\qquad}l@{\qquad}l@{}}
	\tau(p)=p & \tau(1)=1 & \tau(\top)=\top \\
	\tau(A\otimes B)=\tau(A)\otimes\tau(B) & \tau(A\multimap B)=\tau(A)^\sim\oplus\tau(B) & \tau(A\multimapinv B)=\tau(A)\oplus\tau(B)^\sim \\
	\tau(A\wedge B)=\tau(A)\wedge\tau(B) & \tau(A\vee B)=\tau(A)\vee\tau(B) & \tau((\Gamma,\Delta))=(\tau(\Gamma),\tau(\Delta)) \\
	& \tau(\Gamma \Rightarrow A)=(\tau(\Gamma),\;\tau(A)^\sim) &
\end{array}
\]

We see that \(\tau(\mathcal{F}_\FNLt) = \mathcal{F}_1\). As explained in \cref{rem::translation}, \(\tau\) is a reasonable translation to test conservativity against because it agrees with the algebraic semantics of the two logics: on \(\multimap\) and \(\multimapinv\) it uses exactly the operations \(a \multimap b = a^\sim \oplus b\) and \(a \multimapinv b = a \oplus b^\sim\) by which every CyNBL--algebra is already seen to be an \FNLt--algebra (\cref{dfn::fnlt-algebra,dfn::cynbl-algebra}), and on the remaining connectives and constants \(\tau\) is essentially the identity. As a further syntactic property, \(\tau\) is also injective: it is defined by recursion on the structure of \FNLt--formulas, and no two syntactic cases in its definition can produce the same CyNBL--formula (in particular, \(A \multimap B\) and \(A \multimapinv B\) are respectively translated to \(\tau(A)^\sim \oplus \tau(B)\) and \(\tau(A) \oplus \tau(B)^\sim\), which are never confused with one another or with the translation of any other connective). Hence \(\tau\) is a bijection between \(\mathscr{F}_\FNLt\) and \(\mathcal{F}_1\) -- extended, in the obvious way via \(\tau((\Gamma,\Delta))=(\tau(\Gamma),\tau(\Delta))\), to a bijection between \FNLt--bunches and bunches of \(\mathcal{F}_1\)--formulas -- and we write \(\tau^{-1}\) for its inverse. From now on, whenever a bunch or formula built entirely from \(\mathcal{F}_1\) is displayed as an antecedent or succedent of \(\provFNL\) (as, e.g., \(\Delta\) and \(A\) in \cref{thm::main-theorem} below), it stands for its \(\tau^{-1}\)--image, i.e. for the unique \FNLt--bunch or formula that \(\tau\) translates to it.

The translated \FNLt--sequents are \(\mathcal{F}\)--sequents where exactly one \(\mathcal{F}_2\)--formula appears at the end. But, the structural rules of CyNBL allow us to rearrange the formulas in the sequents, so we will work on \(\mathcal{F}\)--sequents, which contain exactly one \(\mathcal{F}_2\)--formula, but it can be anywhere in the sequent.

In Płaczek \cite{placzek-sequent-2024} we showed that a provable \(\mathcal{F}\)--sequent must contain at most one \(\mathcal{F}_2\)--formula. Then, we used \cref{prop::main} to transform the proofs of these sequents into proofs in \FNLt. Here it is not true. Admitting additive constants, especially \(\bot\), allows CyNBL to prove some \(\mathcal{F}\)--sequents with more than one \(\mathcal{F}_2\)--formula. For example, the sequent \(p, p^\sim, \bot\) is provable in CyNBL, but it contains two \(\mathcal{F}_2\)--formulas: \(p^\sim\) and \(\bot\). Here, we complete the results of Płaczek \cite{placzek-sequent-2024} by showing that if an \(\mathcal{F}\)--sequent is provable in \(\mathbf{S}_\Phi\), then it must contain at least one formula from \(\mathcal{F}_2\), which is valid even without additive constants.

\begin{lemma}[\agdaref{Lemma3-2.agda}]
	\label[lemma]{lemma::unprov}
	Let \(\Phi\) be a set of sequents of the form \(C, D^\sim\), where \(C, D \in \mathcal{F}_1\). If an \(\mathcal{F}\)--sequent is provable in \(\mathbf{S}_\Phi\), then it must contain at least one formula from \(\mathcal{F}_2\).
\end{lemma}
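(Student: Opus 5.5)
We argue by induction on the height of a proof in \(\mathbf{S}_\Phi\), and we first reduce to \(\mathcal{F}\)--proofs. Take \(T\) to be the closure under subformulas of \(\mathcal{F}\) together with all \(C,D,C^\sim,D^\sim\) for \((C,D^\sim)\in\Phi\). Since \(C,D\in\mathcal{F}_1\) and \(\mathcal{F}\) is closed under subformulas and under \(^\sim\) (by \cref{dfn::F}, subformulas of \(\mathcal{F}_1\)/\(\mathcal{F}_2\)--formulas lie in \(\mathcal{F}\)), we may take \(T=\mathcal{F}\). By \cref{prop::subformula-property}, a provable \(\mathcal{F}\)--sequent has an \(\mathcal{F}\)--proof. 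We then prove the contrapositive: no \(\mathcal{F}\)--sequent built only from \(\mathcal{F}_1\)--formulas (call it an \(\mathcal{F}_1\)--sequent) is the conclusion of an \(\mathcal{F}\)--proof.

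\textbf{Axioms.} In (a-id) and (a-id2) one of \(p,p^\sim\) lies in \(\mathcal{F}_2\). In (a-0) we have \(0\in\mathcal{F}_2\), and in (a-\(\bot\)) we have \(\bot\in\mathcal{F}_2\). Hence no axiom is an \(\mathcal{F}_1\)--sequent. This is the only place where \(\bot\) enters, and it behaves correctly because \(\bot\in\mathcal{F}_2\) and \(\top\) has no axiom of its own.

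\textbf{Rules.} We check each rule: if the conclusion is an \(\mathcal{F}_1\)--sequent, then some premise is one as well, contradicting the induction hypothesis.
\begin{itemize}
\item For \rrtensor: \(A\otimes B\in\mathcal{F}_1\) forces \(A,B\in\mathcal{F}_1\), since the clause in (v) puts mixed tensors into \(\mathcal{F}_2\) and \(\mathcal{F}_1\cap\mathcal{F}_2=\emptyset\). The premise is then an \(\mathcal{F}_1\)--sequent.
\item For \rrone: removing \(1\) keeps the sequent an \(\mathcal{F}_1\)--sequent.
\item For \rwedge: \(A\wedge B\in\mathcal{F}_1\) gives \(A\in\mathcal{F}_1\).
\item For \rvee: both disjuncts lie in \(\mathcal{F}_1\), so either premise works.
\end{itemize}

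The \(\oplus\)--rules need a small observation. If \(A\oplus B\in\mathcal{F}_1\), then exactly one of \(A,B\) is in \(\mathcal{F}_1\) and the other is in \(\mathcal{F}_2\). Take \roplusone, with premises \(\Gamma[B]\) and \(\Delta,A\). If \(B\in\mathcal{F}_1\), then \(\Gamma[B]\) is an \(\mathcal{F}_1\)--sequent. If instead \(A\in\mathcal{F}_1\), then \(\Delta,A\) is an \(\mathcal{F}_1\)--sequent, because \(\Delta\) is a substructure of the conclusion. The rules \roplustwo, \roplusthree and \roplusfour are handled identically.

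\textbf{Assumption rules.} For \rassmone, the premise \(\Gamma[D]\) is an \(\mathcal{F}_1\)--sequent because \(D\in\mathcal{F}_1\) and \(\Gamma[\_]\) comes from the conclusion. For \rassmtwo, the premise \(\Delta,D\) is an \(\mathcal{F}_1\)--sequent for the same reason. So in both cases a premise is an \(\mathcal{F}_1\)--sequent, which is exactly why we assume \(C,D\in\mathcal{F}_1\). There is no cut to consider, by \cref{lem::cut-elimination}, and no structural rules, since \(\mathbf{S}_\Phi\) has none.

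\textbf{Main obstacle.} The hardest step is justifying that restricting to \(\mathcal{F}\)--proofs is legitimate, i.e.\ that every formula in the proof stays in \(\mathcal{F}\). Otherwise the premises of \rvee or \(\oplus\) could contain formulas outside \(\mathcal{F}\), and the dichotomy \(\mathcal{F}_1/\mathcal{F}_2\) would not apply to them. This rests on \cref{prop::subformula-property} together with checking that \(\mathcal{F}\) is subformula-closed and negation-closed; the latter follows by a routine induction on \cref{dfn::F}.
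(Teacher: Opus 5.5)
Your proof is correct and is essentially the paper's argument: both reduce to \(\mathcal{F}\)-proofs via the generalized subformula property and then induct on the proof, but you run it in contrapositive form (an all-\(\mathcal{F}_1\) conclusion forces an all-\(\mathcal{F}_1\) premise), whereas the paper argues directly (the premises contain an \(\mathcal{F}_2\)-formula, hence so does the conclusion). In your contrapositive form the reduction to \(\mathcal{F}\)-proofs is not actually needed, because you only ever apply the \(\mathcal{F}_1/\mathcal{F}_2\) dichotomy to immediate subformulas of an \(\mathcal{F}_1\)-formula in the conclusion; so the step you flag as the main obstacle does no work in your argument, and it matters only for the paper's direct formulation.
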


\begin{proof}
	Let \(\Phi\) be a set of sequents of the form \(C, D^\sim\), where \(C, D \in \mathcal{F}_1\). Let \(\Gamma\) be an \(\mathcal{F}\)--sequent, provable in \(\mathbf{S}_\Phi\). We show that \(\Gamma\) must contain at least one formula from \(\mathcal{F}_2\). By the generalized subformula property, every provable \(\mathcal{F}\)--sequent must have an \(\mathcal{F}\)--proof.
	
	We proceed by induction on the length of the proof of \(\Gamma\) in \(\mathbf{S}_\Phi\). We notice that every axiom contains at least one formula from \(\mathcal{F}_2\).

	We assume that \(\Gamma\) is the conclusion of a rule. We consider the cases:
	\begin{enumerate}[label={(\roman*)}]
		\item Let \(\Gamma\) be the conclusion of \rrtensor. Then \(\Gamma = \Gamma'[A \otimes B]\) for some \(\Gamma'[\_]\) and \(A, B\). By the induction hypothesis, there must be at least one formula from \(\mathcal{F}_2\) in \(\Gamma'[A, B]\). We consider cases:
			\begin{enumerate}[label={(\alph*)}]
				\item If there is a formula from \(\mathcal{F}_2\) in \(\Gamma'\), then it is also in \(\Gamma'[A \otimes B]\).
				\item If \(A \in \mathcal{F}_2\), then \(A \otimes B \in \mathcal{F}_2\) and hence, it is in \(\Gamma'[A \otimes B]\).
				\item If \(B \in \mathcal{F}_2\), then \(A \otimes B \in \mathcal{F}_2\) and hence, it is in \(\Gamma'[A \otimes B]\).
			\end{enumerate}
		\item The cases for \rvee, \rwedge and (\rrone) are analogous.
		\item Let \(\Gamma\) be the conclusion of \roplusone. Then \(\Gamma = \Gamma'[(\Delta, A \oplus B)]\) for some \(\Gamma'[\_], \Delta, A, B\). By the induction hypothesis, there must be at least one formula from \(\mathcal{F}_2\) in both \(\Gamma'[B]\) and \(\Delta, A\). If \(A \in \mathcal{F}_1\), then \(\Delta\) must contain a formula from \(\mathcal{F}_2\), and hence, \(\Gamma\) contains a formula from \(\mathcal{F}_2\). If \(B \in \mathcal{F}_1\), then \(\Gamma'[\_]\) must contain a formula from \(\mathcal{F}_2\), and hence, \(\Gamma\) contains a formula from \(\mathcal{F}_2\). Let \(A, B \in \mathcal{F}_2\). Then \(A \oplus B \in \mathcal{F}_2\) and hence, it is in \(\Gamma\).
		\item The cases for \roplustwo, \roplusthree and \roplusfour are analogous.
		\item We consider assumption rules. We will show the case for \rassmone, the rest is analogous. Let \(\Gamma = (\Gamma_1, \Gamma_2)\) and \((C, D^\sim) \in \Phi\). By the induction hypothesis, both \(D, \Gamma_2\) and \(\Gamma_1, C^\sim\) contain formulas from \(\mathcal{F}_2\). Since \(D \in \mathcal{F}_1\), \(\Gamma_2\) must contain a formula from \(\mathcal{F}_2\), and hence, \(\Gamma\) contains a formula from \(\mathcal{F}_2\).
	\end{enumerate}
\end{proof}

This result, despite the fact that it is only a helpful lemma, shows how adding \(\bot\) and its axiom changes the properties of the system. This difference is the reason why we must exclude \(\bot\) from the language of \FNLt, if we want to show that CyNBL is a strongly conservative extension of \FNLt. If we allowed \(\bot\) in \FNLt, then the translation of \FNLt--sequents would contain only one formula from \(\mathcal{F}_2\), which is the translation of the succedent, but CyNBL would be able to prove some sequents with more than one formula from \(\mathcal{F}_2\), and we would not be able to recognize the translated \FNLt--sequents among all provable \(\mathcal{F}\)--sequents.

\begin{theorem}[\agdaref{Conservativeness.agda\#L179}]
	\label[theorem]{thm::main-theorem}
	Let \(\Gamma[A^\sim]\) be an \(\mathcal{F}\)--sequent where \(A \in \mathcal{F}_1\) and \(A^\sim\) is the only \(\mathcal{F}_2\)--formula in \(\Gamma[A^\sim]\), and \(\Gamma[A^\sim] \sim \Delta, A^\sim\) for some \(\Delta\). Then \(\provS \Gamma[A^\sim]\), if and only if \(\Phi \provFNL \tau^{-1}(\Delta) \Rightarrow \tau^{-1}(A)\).
\end{theorem}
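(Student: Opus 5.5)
The plan is to prove the two directions separately. The ``if'' direction is routine; the ``only if'' direction carries the real content.

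For the ``if'' direction, I argue by induction on the \FNLt--derivation of \(\Delta \Rightarrow A\) from \(\Phi\) (with \(\Phi\) read on the \FNLt side as the sequents \(\tau^{-1}(C) \Rightarrow \tau^{-1}(D)\)). I show that \(\tau\) of every conclusion is provable in \(\mathbf{S}_\Phi\), up to \(\sim\). The individual cases are as follows.
\begin{itemize}
\item (id) follows from the usual identity expansion.
\item \FNLRtotop becomes \((\tau(\Gamma), \bot)\), an instance of (a-\(\bot\)).
\item The left rules translate directly, after rotating the succedent to the outside using \cref{prop::main}.
\item \FNLRtomultimap and \FNLRtomultimapinv become \rrtensor applied to \(\tau(B)^\sim\) and \(\tau(A)\), combined with \rshift and \rcyc.
\item \FNLmultimapR and \FNLmultimapinvR become \roplusone and \roplustwo (or \roplusthree and \roplusfour).
\item \FNLcut becomes \rcut, which is admissible by \cref{lem::cut-elimination}.
\item The nonlogical axioms become \rassmone applied to axioms.
\end{itemize}
Finally, \cref{prop::main}\ref{prop::simproof} transfers provability from \((\tau(\Delta),\tau(A)^\sim)\) to \(\Gamma[A^\sim]\).

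For the ``only if'' direction, I first generalize the statement so that the induction goes through. The claim is: for every \(\mathcal{F}\)--sequent \(\Gamma\) containing exactly one \(\mathcal{F}_2\)--formula \(E\), if \(\Gamma\) has an \(\mathcal{F}\)--proof in \(\mathbf{S}_\Phi\) and \(\Gamma \sim (\Delta, E)\), then \(\Phi \provFNL \tau^{-1}(\Delta) \Rightarrow \tau^{-1}(E^\sim)\). Here \(\Delta\) is unique by \cref{prop::main}\ref{prop::sim}. By \cref{prop::subformula-property}, applied with \(T = \mathcal{F}\) (which is subformula closed and contains \(C,D,C^\sim,D^\sim\)), I may restrict attention to \(\mathcal{F}\)--proofs.

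The proof of this claim is by induction on the height of the \(\mathcal{F}\)--proof, with a case analysis on the last rule. The tool that drives every case is \cref{lemma::unprov}: each premise is an \(\mathcal{F}\)--sequent and must contain at least one \(\mathcal{F}_2\)--formula. Since the conclusion contains only one, this rigidly determines where \(E\) goes.
\begin{itemize}
\item For (a-id) and (a-id2), \(E = p^\sim\) and the result is \(p \Rightarrow p\).
\item For (a-0), the sequent reduces to \(\epsilon \Rightarrow 1\).
\item For (a-\(\bot\)), \(\bot\) must be the unique \(\mathcal{F}_2\)--formula, so the result is \(\tau^{-1}(\Delta) \Rightarrow \top\), which is \FNLRtotop. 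This is exactly where excluding \(\bot\) from \FNLt and adding \(\top\) is used.
\item For \rrtensor: if \(A\otimes B \in \mathcal{F}_1\), it lies in \(\Delta\) and I use \FNLotimesR. If \(A\otimes B \in \mathcal{F}_2\), then exactly one of \(A,B\) is in \(\mathcal{F}_2\) and it is the succedent. Rotating the premise with \cref{prop::main} then gives a residuation rule, \FNLRtomultimap or \FNLRtomultimapinv.
\item \rwedge, \rvee and \rrone are handled analogously, with \FNLwedgeR/\FNLRtovee and so on, depending on whether the principal formula is the \(\mathcal{F}_2\) one.
\item For the \(\oplus\) rules, \cref{lemma::unprov} forces the unique \(\mathcal{F}_2\)--formula of each premise. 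If \(A\oplus B \in \mathcal{F}_2\), both \(A,B \in \mathcal{F}_2\), which would give two \(\mathcal{F}_2\)'s across two premises. In that case, \(E\) must be \(A\oplus B\) itself, and the two premises become \FNLRtootimes after rotation. If \(A\oplus B \in \mathcal{F}_1\), it is an implication in the antecedent, one premise carries \(E\), and I get \FNLmultimapR or \FNLmultimapinvR.
\item For \rassmone and \rassmtwo, I use \FNLcut with the \FNLt version of the axiom.
\end{itemize}

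The main obstacle is the bookkeeping with \(\sim\) in the \(\oplus\) and assumption cases. I must show that the representative \((\Delta, E)\) of the conclusion decomposes correctly into the representatives of the premises, with the context \(\Gamma'[\_]\) turning into an \FNLt--context. I will use \cref{prop::main}\ref{prop::simcor},\ref{prop::closedsub}: reduction preserves substructures not containing the distinguished hole, and \(\sim\) commutes with substitution. Concretely, in each case I first rotate the hole carrying \(E\) to the right end, then read off the \FNLt--context as the image of the remaining tree. I expect this to require a separate sub-case split according to whether \(E\) lies in \(\Gamma'\) or in \(\Delta\).
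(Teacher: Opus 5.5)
Your proposal is correct and follows essentially the paper's argument: the generalized subformula property with \(T=\mathcal{F}\), \cref{lemma::unprov} to force each premise to carry exactly one \(\mathcal{F}_2\)--formula, rotation of the distinguished formula to the right via \cref{prop::main}, and a rule-by-rule translation into \FNLt, with (a-\(\bot\)) yielding \FNLRtotop and the assumption rules yielding \FNLcut. The differences are only organizational, and both versions work. For the ``only if'' direction you use a single induction on proof height with the claim quantified over the distinguished \(\mathcal{F}_2\)--formula, where the paper uses an outer induction on the complexity of \(C\) and an inner induction on the proof. For the ``if'' direction you simulate \FNLt derivations syntactically, whereas the paper simply invokes \cref{prop::Phi=SPhi} together with the fact that CyNBL extends \FNLt.
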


Below and for the remainder of the paper we drop \(\tau^{-1}\) from the notation and simply write \(\Delta \Rightarrow A\), relying on the convention already fixed above: \(\Delta\) and \(A\) are \(\mathcal{F}_1\)--bunches and formulas, identified with the \FNLt--bunches and formulas they are the \(\tau\)--image of. Writing out \(\tau^{-1}\) at every step of the induction below would only obscure the argument, so we take it as understood from this point on.

\begin{proof}
	The if-part immediately follows from \cref{prop::Phi=SPhi} and the fact that CyNBL is an extension of \FNLt. We prove only the only-if-part.

	Let \(\Theta[C^\sim]\) be an \(\mathcal{F}\)--sequent provable in \(\mathbf{S}_\Phi\) and \(\Theta[\_] \sim (\Psi, \_)\). We show the construction of the proof in \FNLt. We proceed by the outer induction on the number of connectives of \(C\) and the inner induction on the proof of \(\Theta[C^\sim]\). Notice that both \(\Theta[\_]\) and \(\Delta\) have only \(\mathcal{F}_1\)--formulas.
	
	\begin{enumerate}[label={\arabic*\(^\circ\)}]
	
	\item Let \(C = p\). We run the inner induction. If \(\Theta[p^\sim]\) is an axiom, then \(\Theta[p^\sim] = (p, p^\sim)\) or \(\Theta[p^\sim] = (p^\sim, p)\). In both cases, \((\Psi, C^\sim) = (p, p^\sim)\) and \(p \Rightarrow p\) is an axiom in \FNLt.

	\(\Theta[p^\sim]\) cannot be an instance of (a-\(\bot\)), because \(\bot\) is not in \(\mathcal{F}_1\).

	Now we assume \(\Theta[p^\sim]\) is the conclusion of a rule. We observe \(p^\sim\) cannot be the active formula of a rule.

		We consider \rrtensor. The premise is \(\Gamma[(A, B)][p^\sim]\). Then, by \cref{prop::main}\ref{prop::simproof}, \(\provS \Delta[(A,B)], p^\sim\) for \((\Psi, p^\sim) = (\Delta[A \otimes B], p^\sim)\). By the inner induction hypothesis, \(\Delta[(A,B)] \Rightarrow p\) is provable in \FNLt from \(\Phi\). We apply \FNLotimesR and obtain \(\Delta[A \otimes B] \Rightarrow p\). This is the only case, since \(p^\sim \in \mathcal{F}_2\).

	The cases for \rvee, \rwedge and \rrone are analogous.

	We consider \roplusone. Since we assume \(A \oplus B \in \mathcal{F}_1\), either \(A \in \mathcal{F}_1\) or \(B \in \mathcal{F}_1\). We consider the following cases:
	\begin{gather*}
		\AxiomC{\(\Gamma[B][p^\sim] \)}\AxiomC{\( \Delta, A\)}\BinaryInfC{\(\Gamma[(\Delta, A \oplus B)][p^\sim]\)}\DisplayProof \qquad
		\AxiomC{\(\Gamma[B] \)}\AxiomC{\( \Delta[p^\sim], A\)}\BinaryInfC{\(\Gamma[(\Delta[p^\sim], A \oplus B)]\)}\DisplayProof\\
	\end{gather*}

	In the first case, \(A = D^\sim \in \mathcal{F}_2\), by \cref{lemma::unprov}, and \(B \in \mathcal{F}_1\); otherwise \(\Delta\) would contain a formula from \(\mathcal{F}_2\), which contradicts the assumption. By the inner induction hypothesis, \(\Phi \provFNL \Gamma'[B] \Rightarrow p\) and \(\Phi \provFNL \Delta \Rightarrow D\). We apply \FNLmultimapR and obtain \(\Phi \provFNL \Gamma'[(\Delta, D \multimap B)] \Rightarrow p\). Also, \((\Psi, p^\sim) = (\Gamma'[(\Delta, D \multimap B)], p^\sim)\) by \cref{prop::main}\ref{prop::simcor}. We notice that \(\tau(D \multimap B) = A \oplus B\).
	
	In the second case, \(B = D^\sim \in \mathcal{F}_2\), by \cref{lemma::unprov} and \(A \in \mathcal{F}_1\); otherwise \(\Gamma[\_]\) would contain a formula from \(\mathcal{F}_2\), which contradicts the assumption. We know that \(\Gamma[B]\) reduces to \(\Gamma', D^\sim\) and, by the inner induction hypothesis, \(\Phi \provFNL \Gamma' \Rightarrow D\). Also, \(\Delta[p^\sim], A\) reduces to \(\Delta'[A], p^\sim\). So, \(\Phi \provFNL \Delta'[A] \Rightarrow p\). We apply \FNLmultimapinvR and obtain \(\Phi \provFNL \Delta'[(A \multimapinv D, \Gamma')] \Rightarrow p\). One easily checks that \(\Gamma[(\Delta[p^\sim], A \multimapinv D)] \sim (\Delta'[(A \multimapinv D, \Gamma')], p^\sim)\).

	The cases for \roplustwo, \roplusthree and \roplusfour are similar.

	We consider \rassmone. We have two possibilities.
	\[
		(1)\, \AxiomC{\(F, \Gamma[p^\sim] \)}\AxiomC{\( \Delta, E^\sim\)}\BinaryInfC{\(\Delta, \Gamma[p^\sim]\)}\DisplayProof,
	\]
	where \((\Delta, \Gamma[p^\sim]) = \Theta[p^\sim]\) and \((\Delta, \Gamma[p^\sim]) \sim (\Gamma'[\Delta], p^\sim)\) and \(\Gamma'[\Delta] = \Psi\).

		By the inner induction hypothesis, \(\Phi \provFNL \Delta \Rightarrow E\) and \(\Phi \provFNL \Gamma'[F] \Rightarrow p\). Since \(E \Rightarrow F \in \Phi\), \(\Phi \provFNL E \Rightarrow F\). We apply \FNLcut to \(\Gamma'[F] \Rightarrow p\) and \(E \Rightarrow F\) and obtain \(\Phi \provFNL \Gamma'[E] \Rightarrow p\). We apply \FNLcut to this and \(\Delta \Rightarrow E\) and obtain \(\Phi \provFNL \Gamma'[\Delta] \Rightarrow p\).

	\[
		(2)\, \AxiomC{\(D, \Gamma \)}\AxiomC{\( C^\sim, \Delta[p^\sim]\)}\BinaryInfC{\(\Delta[p^\sim], \Gamma\)}\DisplayProof,
	\]
	where \((\Delta[p^\sim], \Gamma) = \Theta[p^\sim]\) and \(\Delta'[\Gamma] = \Psi\). This case is impossible, since, by \cref{lemma::unprov}, the first premise is unprovable.

	The cases for other assumption rules are similar.

	\item Let \(C = 1\). We run the inner induction. \(\Theta[0]\) may be an axiom (a-0). So \(\Theta[\_] = \_\) and \(\Psi = \epsilon\). Hence, \((\Psi, C^\sim) = 1^\sim\) and \(\Phi \provFNL \epsilon \Rightarrow 1\) by \FNLRtoone.

	We assume \(\Theta[0]\) is not an axiom. Then, \(0\) is not the active formula of any rule. We proceed as above.

	\item Let \(C = \top\). We run the inner induction. \(\Theta[\bot]\) may be an axiom (a-\(\bot\)). By \cref{prop::main}\ref{prop::simcor}, \(\Theta[\bot] \sim (\Theta', \bot)\). Since we assumed that \(C^\sim\) is the only \(\mathcal{F}_2\)--formula in \(\Theta[C^\sim]\), then \(\Theta'\) consists of \(\mathcal{F}_1\)--formulas. Hence, \((\Psi, C^\sim) = (\Theta', \bot)\) and \(\Phi \provFNL \Theta' \Rightarrow \top\) by \FNLRtotop.

	We assume \(\Theta[\bot]\) is not an axiom. Then, \(\bot\) is not the active formula of any rule. We proceed as above.

	\item Let \(C = A \otimes B\). Then \(C^\sim = B^\sim \oplus A^\sim\). We run the inner induction. We know that \(\Theta[C^\sim]\) cannot be an axiom. So it has to be the conclusion of a rule. If \(C^\sim\) is not the active formula of a rule, we proceed as above. We assume \(C^\sim\) is the active formula, so we have the following possibilities:
	\begin{gather*}
		(1)\, \AxiomC{\(\Gamma[A^\sim] \)}\AxiomC{\( \Delta, B^\sim\)}\BinaryInfC{\(\Gamma[(\Delta, B^\sim \oplus A^\sim)]\)}\DisplayProof \qquad (2)\, \AxiomC{\(\Gamma[B^\sim] \)}\AxiomC{\( A^\sim, \Delta\)}\BinaryInfC{\(\Gamma[(B^\sim \oplus A^\sim, \Delta)]\)}\DisplayProof\\
		(3)\, \AxiomC{\(B^\sim, \Gamma \)}\AxiomC{\( A^\sim, \Delta\)}\BinaryInfC{\(B^\sim \oplus A^\sim, (\Delta, \Gamma)\)}\DisplayProof \qquad (4)\, \AxiomC{\(\Gamma, B^\sim \)}\AxiomC{\( \Delta, A^\sim\)}\BinaryInfC{\((\Delta, \Gamma), B^\sim \oplus A^\sim\)}\DisplayProof
	\end{gather*}
	In (1) we have \(\Gamma[\_] \sim (\Gamma', \_)\), so, \(\Gamma[(\Delta, B^\sim \oplus A^\sim)] \sim ((\Gamma', \Delta), B^\sim \oplus A^\sim)\). By the inner induction hypothesis, \(\Phi \provFNL \Gamma' \Rightarrow A\) and \(\Phi \provFNL \Delta \Rightarrow B\). We apply \FNLRtootimes and obtain \(\Phi \provFNL \Gamma', \Delta \Rightarrow A \otimes B\). In (2), (3) and (4) we proceed similarly.

	\item Let \(C = A \oplus B\). Then either \(A \in \mathcal{F}_1\) or \(B \in \mathcal{F}_1\). Suppose \(B \in \mathcal{F}_1\). Then \(C = D^\sim \oplus B\) and \(C^\sim = B^\sim \otimes D\). We run the inner induction. \(\Theta[C^\sim]\) cannot be an axiom. Hence it has to be the conclusion of a rule. If \(C^\sim\) is not the active formula of a rule, we proceed as above. We assume \(C^\sim\) is the active formula, so we have only one possibility:
	\[
		\AxiomC{\(\Gamma[(B^\sim, D)]\)}\UnaryInfC{\(\Gamma[B^\sim \otimes D]\)}\DisplayProof
	\]
	Then \(\Gamma[\_] \sim (\Gamma', \_)\) and \(\Gamma[(B^\sim, D)] \sim ((D, \Gamma'), B^\sim)\). By the induction hypothesis, \(\Phi \provFNL D, \Gamma' \Rightarrow B\). We apply \FNLRtomultimap and obtain \(\Phi \provFNL \Gamma' \Rightarrow D \multimap B\). We notice that \(\tau(D \multimap B) = D^\sim \oplus B\).

	The other case is similar.

	\item Let \(C = A \vee B\), then \(C^\sim = A^\sim \wedge B^\sim\). We run the inner induction. \(\Theta[C^\sim]\) cannot be an axiom. Hence it has to be the conclusion of a rule. If \(C^\sim\) is not the active formula of a rule, we proceed as above. We assume \(C^\sim\) is the active formula, so we have two possibilities:
	\[
		\AxiomC{\(\Theta[A^\sim]\)}\UnaryInfC{\(\Theta[A^\sim \wedge B^\sim]\)}\DisplayProof \qquad \AxiomC{\(\Theta[B^\sim]\)}\UnaryInfC{\(\Theta[A^\sim \wedge B^\sim]\)}\DisplayProof
	\]
	In both cases we apply the inner induction hypothesis to the premise and then we apply \FNLRtowedge.

	\item Let \(C = A \wedge B\), then \(C^\sim = A^\sim \vee B^\sim\). We run the inner induction. \(\Theta[C^\sim]\) cannot be an axiom. Hence it has to be the conclusion of a rule. If \(C^\sim\) is not the active formula of a rule, we proceed as above. We assume \(C^\sim\) is the active formula, so we have one case:
	\[
		\AxiomC{\(\Theta[A^\sim] \)}\AxiomC{\( \Theta[B^\sim]\)}\BinaryInfC{\(\Theta[A^\sim \vee B^\sim]\)}\DisplayProof
	\]
	We apply the inner induction hypothesis to both premises and then we apply \FNLRtovee.
	
	\end{enumerate}
\end{proof}

\begin{corollary}
	CyNBL is a strongly conservative extension of \FNLt.
\end{corollary}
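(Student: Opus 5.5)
To derive the corollary from \cref{thm::main-theorem}, we unfold the definition of strong conservative extension along the translation \(\tau\). We have to show that for every set \(\Phi_0\) of \FNLt--sequents and every \FNLt--sequent \(\Delta \Rightarrow A\), we have \(\tau(\Phi_0) \vdash_{\mathrm{CyNBL}} \tau(\Delta \Rightarrow A)\) if and only if \(\Phi_0 \provFNL \Delta \Rightarrow A\).

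The right-to-left direction is soundness of the translation. It follows from strong completeness of both systems together with the observation that every CyNBL--algebra is an \FNLt--algebra via \(a \multimap b = a^\sim \oplus b\) and \(a \multimapinv b = a \oplus b^\sim\). Alternatively, one can check syntactically that each \FNLt rule, once translated, is derivable in \(\mathbf{S}_\Phi\); this is the if-part already noted in the proof of \cref{thm::main-theorem}.

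For the left-to-right direction, first normalise the assumptions. Each \(\Gamma' \Rightarrow B \in \Phi_0\) is replaced by the single-formula sequent \(C \Rightarrow B\), where \(C\) is \(\Gamma'\) with commas read as \(\otimes\), or \(C = 1\) if \(\Gamma' = \epsilon\). This replacement is interderivable in \FNLt by \FNLotimesR, \FNLoneR and \FNLcut in one direction, and by \FNLRtootimes, \FNLRtoone and \FNLcut in the other. Its translation is \((\tau(C), \tau(B)^\sim)\) with \(\tau(C), \tau(B) \in \mathcal{F}_1\), which is exactly the shape of the \(\Phi\) required by \cref{lemma::unprov} and \cref{thm::main-theorem}. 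The corresponding CyNBL assumption is likewise interderivable with \(\tau(\Gamma' \Rightarrow B)\) by the reversible rule \rrtensor, together with \rrone.

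Let \(\Phi\) be the resulting set. By \cref{prop::Phi=SPhi}, CyNBL--derivability of \((\tau(\Delta), \tau(A)^\sim)\) from \(\Phi\) is equivalent to \(\provS (\tau(\Delta), \tau(A)^\sim)\). This sequent is an \(\mathcal{F}\)--sequent whose only \(\mathcal{F}_2\)--formula is \(\tau(A)^\sim\), and it is trivially \(\sim\) to itself in the form \(\Delta, A^\sim\). So \cref{thm::main-theorem} applies with \(\Gamma[\_] = (\tau(\Delta), \_)\) and yields \(\Phi \provFNL \Delta \Rightarrow A\), hence \(\Phi_0 \provFNL \Delta \Rightarrow A\).

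The only delicate point is the normalisation of \(\Phi_0\). I will check that it preserves derivability on both sides, so that the consequence relations being compared really are the ones indexed by the original \(\Phi_0\). Injectivity of \(\tau\), recorded in the paper, guarantees that \(\tau^{-1}\) recovers exactly the original \FNLt--sequent.
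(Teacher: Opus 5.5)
Your proposal is correct and follows the paper's route. The paper states the corollary without separate proof, as an immediate consequence of \cref{thm::main-theorem} together with \cref{prop::Phi=SPhi}, with the if-part being the fact that CyNBL extends \FNLt. Your argument spells out the step the paper leaves to its normalisation remark in the section on \(\mathbf{S}_\Phi\): rewriting arbitrary \FNLt assumptions into the shape \(C, D^\sim\) with \(C, D \in \mathcal{F}_1\), using \(C = 1\) for empty antecedents, with interderivability checked on both sides, which is exactly what \cref{lemma::unprov} and the theorem require.
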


\section{Final remarks}

Abrusci \cite{abrusci-classical-2002} provides a counterexample to the conservativeness of CyBL over FL (Theorem 10), leaving the proof to the reader. We point out that this example contains a small gap: translated literally into the language of this paper, Abrusci's sequent is not in fact provable in CyNBL, nor even in CyBL. A minor correction to the sequent restores the intended counterexample; we detail this below.

Abrusci's sequent, translated to the language of this paper, is:
\[
	\left(p \multimap q\right) \multimap \bot \, , \, p \multimap \left(\left( \bot \multimap r\right) \multimap s\right) \Rightarrow s
\]
Abrusci states that this sequent is not provable in FL (and hence not in FNL), but is provable in CyBL (and in CyNBL). Its translation to CyNBL is:
\[
	\left(\left(q^\sim \otimes p\right) \oplus \bot \, , \, p^\sim \oplus \left(\left(r^\sim \otimes \bot\right) \oplus s\right)\right) \, , \, s^\sim
\]
This is the gap: the displayed sequent is not provable in CyNBL, nor even in CyBL. It is easily fixed:
\[
	p \multimap \bot \, , \, \left(q \multimap \top\right) \multimap \left(\left( \bot \multimap r\right) \multimap s\right) \Rightarrow s
\]
Now, it is provable in CyNBL, but not in FNL. In FNL we would need to have \(\Rightarrow p\). In CyNBL:
\[
\AxiomC{}
\LeftLabel{(a-\(\bot\))}
\UnaryInfC{\(r^\sim, \bot\)}
\LeftLabel{\rrtensor}
\UnaryInfC{\(r^\sim \otimes \bot\)}
\AxiomC{}
\LeftLabel{(a-id)}
\UnaryInfC{\(s, s^\sim\)}
\LeftLabel{\roplustwo}
\BinaryInfC{\(\left(r^\sim \otimes \bot\right) \oplus s, s^\sim\)}
\AxiomC{}
\LeftLabel{(a-\(\bot\))}
\UnaryInfC{\(p^\sim, \left(\bot , q\right)\)}
\LeftLabel{\rrtensor}
\UnaryInfC{\(p^\sim, \left(\bot \otimes q\right)\)}
\AxiomC{}
\LeftLabel{(a-\(\bot\))}
\UnaryInfC{\(\bot\)}
\LeftLabel{\roplustwo}
\BinaryInfC{\(p^\sim \oplus \bot, \left(\bot \otimes q\right)\)}
\LeftLabel{\roplusone}
\BinaryInfC{\( \left(\left(p^\sim \oplus \bot\right) \, , \, \left(\left(\bot \otimes q\right) \oplus \left(\left(r^\sim \otimes \bot\right) \oplus s\right)\right)\right) \, , \, s^\sim \)}
\DisplayProof
\]

\paragraph*{Acknowledgements}
C.-S.\ Wan was supported by Estonian Research Council grants no.\ PSG749.

The authors want do declare the scope of individual contributions to this paper. P.\ P\l{}aczek was responsible for the main results of the paper, including the proof of strong conservativeness. C.-S.\ Wan was responsible for the Agda formalization of the results and for the counterexample to the conservativeness of CyBL over FL.

\bibliographystyle{eptcs}
\bibliography{fullbib}

\end{document}